\documentclass{article}
\usepackage[utf8]{inputenc}
\usepackage{amsmath, amssymb, amsthm, amscd}
\usepackage{graphicx}
\usepackage{hyperref}
\usepackage{geometry}

\usepackage[all]{xy}

\newtheorem{theorem}{Theorem}[section]

\newtheorem{proposition}[theorem]{Proposition}

\newtheorem{definition}[theorem]{Definition}
\newtheorem{remark}[theorem]{Remark}

\newcommand{\bin}{\{0,1\}^*} 
\newcommand{\T}{\mathtt{T}}
\newcommand{\TB}{\mathtt{TB}}
\newcommand{\st}{\mathtt{stab}}
\newcommand{\I}{\mathtt{I}}
\newcommand{\m}{m^U}
\newcommand{\ms}{m^U_G}
\newcommand{\fix}{\mathtt{fix}}
\newcommand{\s}{\mathtt{sec}}

\title{CAS I: A Geometric Coding Theorem}
\author{Romie Banerjee}
\begin{document}
\maketitle

\begin{abstract}
This paper establishes a direct analogue of the classical Coding Theorem in the setting of symmetry groups. We consider computable bijections on the set of binary strings—called symmetries—and define the symmetry prior of a string $x$ as the probability that a randomly chosen symmetry from a given group $G$ has $x$ as its unique fixed point. We show that for any fix-retractable symmetry group $G$—a group admitting a computable section that selects an isolating symmetry for every string—the symmetry prior is a universal lower semi-computable semi-measure. In this case, the Geometric Coding Theorem holds:
$$
-\log_2 m_G(x) = K(x) + O(1)
$$

where $K(x)$ is the prefix Kolmogorov complexity of $x$. 
We also develop a Galois connection between subgroups of $G$ and subsets of binary strings, characterizing closed points and maximal closed subgroups, and explore the join-semilattice of dense subgroups. Our results unify algorithmic information theory with group theory and provide a framework for studying symmetry-induced complexity measures.

This paper is the first in a series on Computational Algorithmic Statistics (CAS). 

\medskip
\noindent\textbf{Keywords:} \emph{Kolmogorov complexity, Solomonoff prior, algorithmic probability, coding theorem, symmetry groups, Galois connections}
\end{abstract}

\section{Introduction} 
\subsection{Program-Size Complexity}
The concept of algorithmic complexity defined by Solomonoff, Kolmogorov, Chaitin and Levin is a measure that quantifies the algorithmic randomness of a binary string. Formally, the algorithmic complexity, denoted by $K$ of a binary string $x$ is the length of the shortest computer program $p$ running on a universal Turing machine $U$  that generates the string $x$ as an output and halts [\cite{LiVitanyi2008} Ch. 2].
\begin{equation}
    K_U(x) = \min\{\ell(p) \mid U(p) = x\}
\end{equation}

The \emph{invariance theorem} says that the definition of $K$ above is independent of the choice of the universal Turing machine $U$. For a a different choice $V$, the $K$ values differ by a constant term independent of $x$. The constant term comes from the cost of simulating $U$ by $V$ and vice-versa. 

The function $K$ that takes in $x$ and gives the length of the shortest program producing $x$ is uncomputable. This is due to the halting problem, given that one cannot always find  the shortest problem in finite time without having to run all programs. This means having to wait forever in case they never halt. The function $K$ is however \emph{upper semi-computable}, i.e. it can approximated from above by computable functions. This is because the set of computer programs that output $x$ and halt is recursively enumerable.

\subsection{Algorithmic Probability} The ordinary probability of production of a binary string $x$  among all possible $2^n$ strings of length $n$ is given by $P(x) = 1/2^n$. The concept of algorithmic probability replaces the random production of outputs with the random production of programs that produce the output. 

The algorithmic probability (or Solomonoff prior or Levin's semi-measure) $m(x)$ of a binary string $x$ is the sum over all prefix-free programs $p$  for which a universal Turing prefix machine $U$ running $p$ outputs $x$ and halts. The replaces the length of $x$ with the length of the program $p$ that produces $x$. It is thus a measure that estimates the probability of a random program $p$  producing $x$ when run on $U$. [\cite{LiVitanyi2008} Ch. 3]
\begin{equation}
    \m(x) = \sum_{p \mid U(p) = x} \frac{1}{2^{\ell(p)}}
\end{equation}
The algorithmic probability is independent of the choice of universal prefix machine (up to a constant term) and as a function from binary strings to $\mathbb{R}$ is \emph{lower semi-computable}. 

\subsection{Coding Theorem}
There is a direct and miraculous connection between the algorithmic complexity and algorithmic probability of binary strings. 

The negative logarithm of the algorithmic probability $-\log_2 \m(x)$ gives a complexity measure. By this measure the complexity of a string is directly related to the frequency of production of the string in a universal prefix machine. More programs imply lower complexity. This is related to the algorithmic complexity as the shortest program producing $x$ on $U$ will contribute to the largest term contributing to $\m(x)$. The inequality $-\log_2 \m(x) \leq K_U(x)$ is obvious. 

The algorithmic coding theorem says that something much stronger is true. 
\begin{theorem} [\cite{LiVitanyi2008} Thm.4.3.3 ]
    \label{thm:coding} 
    The algorithmic coding theorem,
    \begin{equation}
    K_U(x) = -\log_2 \m(x) + O(1)
    \end{equation}
\end{theorem}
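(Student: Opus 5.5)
The easy direction has already been observed: the shortest program $p$ for $x$ contributes $2^{-K_U(x)}$ to the sum defining $\m(x)$, so $\m(x)\ge 2^{-K_U(x)}$, that is, $-\log_2 \m(x)\le K_U(x)$. All the work is in the reverse inequality $K_U(x)\le -\log_2\m(x)+O(1)$. For this I will use the standard route through a prefix-free encoding built from a lower-semicomputable semimeasure (the Kraft--Chaitin, or Shannon--Fano--Elias, construction). This shows that any lower-semicomputable $\mu$ with $\sum_x \mu(x)\le 1$ gives a prefix machine $M$ with $K_M(x)\le -\log_2\mu(x)+O(1)$. The invariance theorem then transfers this to $U$.

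The key steps, in order, are as follows.

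\emph{Step 1.} Check that $\m$ is a lower-semicomputable semimeasure. Dovetail $U$ on all programs; each time some $p$ halts with output $x$, add $2^{-\ell(p)}$ to the current approximation of $\m(x)$. Since the domain of $U$ is prefix-free, Kraft's inequality gives $\sum_x\m(x)\le 1$.

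\emph{Step 2.} Enumerate requests. Let $\m_t(x)$ be the stage-$t$ approximation. Whenever $\m_t(x)$ first exceeds $2^{-k}$ for an integer $k$, issue the request $(x,k+1)$. For a fixed $x$, the requests $(x,k+1)$ are issued exactly for those $k$ with $2^{-k}<\m(x)$. Their total weight is therefore at most $\sum_{k:2^{-k}<\m(x)}2^{-k-1}<\m(x)$, and summing over $x$ gives total requested weight at most $1$.

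\emph{Step 3.} Apply Kraft--Chaitin. From an enumerated sequence of length requests $(x_i,n_i)$ with $\sum_i 2^{-n_i}\le 1$, one can computably assign pairwise prefix-free codewords $c_i$ with $\ell(c_i)=n_i$. Concretely, maintain the set of free dyadic intervals of $[0,1)$, keeping at most one free interval of each length and sorted by size. Serve each request from the smallest free interval of length at least $2^{-n_i}$, and split off the remainder.

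\emph{Step 4.} Define $M(c_i)=x_i$. This $M$ is a prefix machine, and for each $x$ with $\m(x)>0$, taking the least $k$ with $2^{-k}<\m(x)$ gives $K_M(x)\le k+1\le -\log_2\m(x)+2$. The invariance theorem then gives $K_U(x)\le K_M(x)+c_M$, which finishes the proof.

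The main obstacle is Step 3, proving that the online allocation never fails. Requests arrive in arbitrary order and cannot be revised, so I must show a dyadic interval of the needed length is always available. I will prove the invariant that after each allocation, the free region is a disjoint union of dyadic intervals of \emph{distinct} lengths whose total measure equals $1$ minus the allocated weight. If a request for length $2^{-n}$ arrives and every free interval has length smaller than $2^{-n}$, then the free measure is less than $\sum_{j>n}2^{-j}=2^{-n}$. That contradicts the budget $\sum_i 2^{-n_i}\le 1$. Otherwise, halving the smallest sufficiently large interval repeatedly produces one new free interval of each intermediate length, none of which were present before, so the invariant is preserved.
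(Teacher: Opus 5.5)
Your argument is correct and complete. The easy direction, the lower semicomputability and Kraft bound for $\m$, and the request weights all check out; for each $x$ the weights sum to exactly $2^{-k_0}<\m(x)$, where $k_0$ is the least $k$ with $2^{-k}<\m(x)$. So do the distinct-lengths invariant for the online dyadic allocator and the final bound $K_U(x)\le -\log_2\m(x)+2+c_M$.

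The paper does not prove this statement. It cites Li--Vit\'anyi and offers only the heuristic that $k$ programs of length $n$ for $x$ yield one of length $n-\log_2 k+O(1)$. That heuristic is a corollary of your result, since such programs force $\m(x)\ge k2^{-n}$. What it leaves implicit is exactly your Steps 2--3: converting mass that can only be approximated from below into a single short, uniformly decodable codeword. So your write-up is the standard effective Shannon--Fano/Kraft--Chaitin proof behind the citation, not a different route.

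One small repair is needed. As literally stated, Step 2 issues infinitely many requests at the first stage where $\m_t(x)>0$, because every sufficiently large $k$ then has $2^{-k}<\m_t(x)$. You should dovetail instead, for example by issuing at stage $t$ only the not-yet-issued requests with $k\le t$. Alternatively, enumerate the r.e. set $\{(x,k+1) : \m(x)>2^{-k}\}$ without repetition. Either way the set of requests and its total weight are unchanged, and an effective enumeration is all Step 3 requires.
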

The theorem establishes that, if there are many programs producing a string then there is also a short program producing the string ; and strings of lower frequency have higher Kolmogorov complexity. The key idea behind the proof comes down to the observation that if there are $k$ prefix programs of length $n$ producing $x$ on $U$, then there is a prefix program of length $n - \log_2 k + O(1)$ that produces $x$ on $U$.

There is another way of state the coding theorem. The semi-measure on bit strings defined by $P(x) = 1/2^{K(x)}$ is a \emph{universal} lower semi-computable semi-measure.

\subsection{Geometric Coding Theorem}
A symmetry is a computable bijection of the set of binary strings. If a symmetry has a string $x$ as its only fixed point, it can be framed as program that produces the string $x$ and halts. 

\begin{enumerate}
    \item[]
    \emph{Q: Do the frequency of symmetry programs producing $x$ have a direct relationship with the Kolmogorov complexity of $x$ in the same way the frequency of ordinary programs does (via the classical coding theorem)? }
\end{enumerate}

Let the \emph{symmetry prior} be the probailiy of producing a bit string by randomly selecting a symmetry program. The symmetry prior $\ms(x)$ of a bit string $x$ is the sum over all prefix-free programs symmetry programs $p$ for which a universal Turing prefix machine $U^G$ running $p$ outputs $x$ and halts.
\begin{equation}
    \ms(x) = \sum_{p \mid U^G(p) = x} \frac{1}{2^{\ell(p)}}
\end{equation}

In this paper we lay down a condition on the group of symmetries so that the coding theorem with the symmetry prior is true. 

\begin{theorem}
    \label{thm:gct}
    Let $G$ be a group of computable bijections of $\bin$ with following property: there exists a subgroup $H \leq G$ which can isolate all bit strings and the symmetries $H$ and isolating symmetries $\I_H$ are recursively enumerable. Then the following (equivalent) statements are true:
    \begin{enumerate}
        \item $K_U(x) = -\log_2 \ms(x) + O(1)$.
        \item The symmetry prior dominates the Solomonoff prior: $\exists C$ s.t $C.\ms(x) \geq \m(x), \forall x \in \bin$. 
        \item The symmetry prior is a universal lower semi-computable semi-measure on finite bit strings
    \end{enumerate}
\end{theorem}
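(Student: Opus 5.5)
The plan is to derive all three statements from a single domination inequality, statement 2, and then use the classical Coding Theorem (Theorem~\ref{thm:coding}) to convert domination into statement 1.

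\textbf{Step 1: $\ms$ is a lower semi-computable semi-measure.} The symmetry programs are those $p$ for which $U^G$ runs $p$, produces a symmetry in $G$, and outputs that symmetry's unique fixed point $x$. These programs form a prefix-free set, so the Kraft inequality gives $\sum_x \ms(x)\le 1$. Lower semi-computability follows by dovetailing: enumerate the pairs $(p,x)$ with $U^G(p)=x$ as they halt and accumulate $2^{-\ell(p)}$ into the approximation of $\ms(x)$. Here I use the hypothesis that $H$ and $\I_H$ are recursively enumerable, so that the set of programs counted in $\ms$ is itself r.e.

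\textbf{Step 2: domination of $\m$ by $\ms$ (statement 2).} This is the key construction. Because $H$ isolates every string and $\I_H$ is r.e., there is a partial computable map $s$ which, given $x$, searches the enumeration of $\I_H$ and returns the first symmetry $s(x)\in H\le G$ whose unique fixed point is $x$.

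\begin{itemize}
\item Define a prefix machine $V$ as follows. On input $q$, simulate $U(q)$. If it outputs $x$, compute an index for $s(x)$, and then behave as the symmetry program encoding $s(x)$, whose output is its fixed point $x$.
\item Since $U^G$ is universal among symmetry-program machines, there is a constant $c$ such that each such $q$ corresponds to a symmetry program $p_q$ with $\ell(p_q)\le \ell(q)+c$ and $U^G(p_q)=x$.
\item The map $q\mapsto p_q$ is injective with prefix-free image. Summing over $q$ therefore gives $\ms(x)\ge 2^{-c}\m(x)$, which is statement 2 with $C=2^c$.
\end{itemize}

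\textbf{Step 3: the equivalences.} Given statement 2, statement 3 follows: $\m$ is universal, and $\ms$ is a lower semi-computable semi-measure by Step 1. Universality of $\ms$ then in turn gives $\ms\le C'\m$. Combining the two bounds, $\ms=\Theta(\m)$, and Theorem~\ref{thm:coding} yields $-\log_2\ms(x)=-\log_2\m(x)+O(1)=K_U(x)+O(1)$, which is statement 1.

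For the converse implications, statement 1 gives $\ms(x)\ge 2^{-K(x)-O(1)}=\Omega(\m(x))$, which is statement 2. Statement 3 implies statement 2 directly, since a universal semi-measure dominates every lower semi-computable semi-measure, in particular $\m$.

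\textbf{Main obstacle.} I expect Step 2 to be the hardest, specifically the requirement that the simulation overhead is additive in program length. This requires fixing a precise model of $U^G$: its programs must be codes for $H$-elements, possibly prefixed by a computation that produces such a code. If instead $U^G$ only accepts literal codes of symmetries, then $\ell(p_q)$ depends on the description length of $s(x)$ rather than on $\ell(q)$. I would first handle this by encoding $s(x)$ as "run $q$, then apply the computable section", which is a single computable bijection description of length $\ell(q)+O(1)$. The delicate point is then to check that this description is legitimately an element of $G$.
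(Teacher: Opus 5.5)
\textbf{Step 2 matches the paper; Step 1 and the upper bound do not.} Your Step 2 is the paper's argument. You build the section $\s$ by searching the enumeration of $\I_H$ for an $h$ with $h(x)=x$ (Thm.~\ref{thm:alg}), and lift it to programs ``run $q$, then apply $\s$'' (Thm.~\ref{thm:simul}), which gives $\ms \ge 2^{-c}\m$. The representability issue you flag in your last paragraph is real, and the paper does not address it either.

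The genuine gap is Step 1, and your upper bound depends on it. The programs counted in $\ms(x)$ are the elements of $U^G_x$, i.e.\ codes of isolators of $x$ anywhere in $G$, not only in $H$. Certifying that a symmetry has no fixed point other than $x$ is a co-r.e.\ condition (Prop.~\ref{prop:symmalg}), and the paper states explicitly that the set of symmetry programs computing $x$ is not r.e. The hypothesis that $\I_H$ is r.e.\ says nothing about isolators in $G$ outside $H$. Operationally, a machine that runs $p$ and searches for a fixed point halts on every element of $G$ having some fixed point, isolating or not. So your dovetailed sums converge to a function that can be strictly larger than $\ms$. Consequently the bound $\ms \le C'\m$, which you get in Step 3 by applying universality of $\m$ to $\ms$, is unsupported, and statement 1 does not follow as written.

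\textbf{The missing ingredient.} What you need is the paper's second simulation (the $\fix$ direction in Thm.~\ref{thm:simul} and Thm.~\ref{thm:universal}), which needs no enumerability of $U^G_x$. Send $q \in U^G_x$ to the ordinary program ``run $q$, then search $\bin$ for a fixed point of the resulting symmetry''. This program halts with output $x$ precisely because that symmetry isolates $x$. The map is injective with prefix-free image and costs $\ell(q)+O(1)$ bits, so $\m(x) \ge 2^{-c_2}\ms(x)$.

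There is an equivalent fix closer to your own Step 1, assuming as the paper does that $U^G$ effectively enumerates $G$. The quantity your Step 1 actually computes is a lower semi-computable semi-measure that dominates $\ms$. Comparing \emph{it} with $\m$ by universality gives the same upper bound.

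Together with Step 2 this yields $\ms \asymp \m$, and statement 1 then follows from Theorem~\ref{thm:coding}.

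\textbf{Lower semi-computability in statement 3.} The route available is the paper's Prop.~\ref{thm:lsc}: enumerate an r.e.\ probability-dense subset of $U^G_x$, for instance the image of $U_x$ under your Step 2 map. This only produces a lower semi-computable function within a constant factor of $\ms$, not $\ms$ itself. The paper's argument has the same limitation, so statement 3 is established only in that $\asymp$ sense.
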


\subsection{Results Dependencies}


\begin{equation}
\xymatrix{
\boxed{\exists H \leq G \mid  H, \I_H \text{ r.e.}} \ar@{=>}[dr] \ar@{<=>}[r]^{\text{Thm.}\ref{thm:alg}} &\boxed{G \text{ fix-retractable } \ref{def:fix-ret}}\ar@{=>}[d]^{\text{Thm.}\ref{thm:simul}}\\
&\boxed{U^G \text{ simulates } U \ref{def:simul}} \ar@{=>}[d]^{\text{Thm.}\ref{thm:universal}} \ar@{=>}[dl]^{\text{Thm.}\ref{thm:lsc}}\\
\boxed{LSC}  &\boxed{GCT \ref{thm:gct}} \\
}
\end{equation}

\section{Coding with symmetries}

\subsection{Symmetries}

The set of total recursive functions $\bin \to \bin$ is denoted by $\T$. The \emph{group} of total recursive \emph{bijective} functions is denoted $\TB$. Elements of $\TB$ are called \emph{symmetries}.
\begin{equation}
    \T = \{ f : \bin \to \bin \mid f(x) \text{ halts } \forall x \}
\end{equation}
\begin{equation}
    \TB = \{ f \in \T \mid f \text{ is a bijection} \}
\end{equation}

Given $x \in \bin$, the set of symmetries that fix $x$ is a subgroup of $\TB$ denoted by $\st(x)$.
\begin{equation}
    \st(x) = \{f \in \TB \mid f(x) = x\} \leq \TB
\end{equation}

A symmetry $f \in \st(x)$ \emph{isolates} $x$ when $x$ is the only fixed point of $x$. The set of isolating symmetries for $x$ is denoted by $I_x$.
\begin{equation}
    \I(x) = \{ f \in \TB \mid f(x) = x \text{ and } f(y) \neq y,  \forall y \in \bin \} \subseteq \st(x)
\end{equation}

Let $\I$ denote the set of all isolating symmetries in $\TB$. 
\begin{equation}
    \I = \{ f \in \TB \mid |\fix(f)| = 1 \} = \coprod_{x \in \bin} \I(x)
\end{equation}

Let $G \leq \TB$ be a subgroup of total bijections of $\bin$. Subgroups of $\TB$ will be called \emph{symmetry groups}. Given $x \in \bin$, the elements of $G$ that fix $x$,
\begin{equation}
    \st_G(x) = \{g \in G \mid g(x) = x \} \leq G 
\end{equation}
The set of isolating symmetries in $G$ for $x$ is denoted by $\I_G(x)$.
\begin{equation}
        \I(x) = \{ g \in G \mid g(x) = x \text{ and } g(y) \neq y,  \forall y \in \bin \} \subseteq \st_G(x)
\end{equation}
The set of all isolating symmetries in $G$ is denoted $\I_G$.
\begin{equation}
    \I_G = \{ g \in G \mid |\fix(g)| = 1 \} = \coprod_{x \in \bin} \I_G(x)
\end{equation}

\subsection{Algorithmic properties of symmetry groups}

The group of total recursive bijections \emph{$\TB$ is not r.e.}. To enumerate all computable bijections, one needs to enumerate all programs that compute total bijective functions. This requires checking totality and bijectivity, which are $\Pi_2$-complete.

Since $\TB$ is not r.e., there is no effective enumeration of $\st(x)$ either. If we make the assumtion however that $G$ is a r.e. subgroup of $\TB$, the stabiliser group $\st_G(x)$ is r.e. One can enumerate through $G$ and check if $g \in G$ fixes $x$, provided that the action of the group on bit strings is computable . 

Under the assumption that $G$ is r.e., the subset $\I_G(x)$ is co-r.e. To enumerate $\I_x$, one would enumerate elements of $\st_G(x)$ to check that they have no fixed points other than $x$. This is a co-r.e. condition (finding the second fixed point is r.e., proving none exists is co-r.e.). 

\begin{proposition}
    \label{prop:symmalg}
    Let $G$ be a r.e. symmetry group. Then for all $x \in \bin$, the stabilizer subgroups $\st_G(x)$ are r.e.
    The point-wise isolators and total isolator $\I_G(x)$ and $\I_G$ are co-r.e.    
\end{proposition}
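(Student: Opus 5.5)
The plan is to fix once and for all what ``r.e.'' means for a symmetry group, and then prove each clause by writing membership as a quantifier formula over decidable matrices. I take ``$G$ is r.e.'' to mean that there is a computable sequence of indices $e_0,e_1,e_2,\dots$ such that $G=\{\varphi_{e_n} : n\in\mathbb{N}\}$, where each $\varphi_{e_n}$ is a total bijection of $\bin$. A subset $S\subseteq G$ is called r.e. (resp. co-r.e.) when the index set $\{n : \varphi_{e_n}\in S\}$ is r.e. (resp. co-r.e.). Because every $g\in G$ is total, each atomic test ``$g(y)=z$'' for a given $n$, $y$, $z$ is decidable: we run $\varphi_{e_n}$ on $y$, it halts, and we compare the result with $z$.

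\emph{Stabilizers.} For fixed $x$ we have $n\in\st_G(x)$ iff $\varphi_{e_n}(x)=x$, and this condition is decidable in $n$. The stabilizer is therefore r.e., in fact decidable relative to the enumeration. It is enumerated by running through $n=0,1,2,\dots$ and outputting those $n$ that pass the test. The whole construction is uniform in $x$.

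\emph{Pointwise isolators.} For fixed $x$,
\begin{equation*}
  n\in\I_G(x)\iff \varphi_{e_n}(x)=x \;\wedge\; \forall y\,\bigl(y\neq x\Rightarrow \varphi_{e_n}(y)\neq y\bigr).
\end{equation*}
The first conjunct is decidable and the second is a $\Pi_1$ formula with a decidable matrix, so $\I_G(x)$ is $\Pi_1$, i.e. co-r.e. Concretely, its complement in $G$ is enumerated by dovetailing over all pairs $(n,y)$. We output $n$ when either $\varphi_{e_n}(x)\neq x$, or some $y\neq x$ satisfies $\varphi_{e_n}(y)=y$. This is exactly the remark in the paper that ``finding the second fixed point is r.e.'' Again everything is uniform in $x$, so the two-place relation $R=\{(n,x) : \varphi_{e_n}\in\I_G(x)\}$ is co-r.e.

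\emph{Total isolator.} Since $\I_G=\coprod_x \I_G(x)$, I deduce co-r.e.-ness of $\I_G$ from the uniform co-r.e.-ness of $R$. The graded set $\coprod_x \I_G(x)$, viewed as a set of pairs (symmetry, isolated point), is the set $R$ itself, and that set is co-r.e. by the previous step. The point of the coproduct notation is exactly that each element of $\I_G$ carries its unique fixed point. This step is the main obstacle, and I would be explicit about it. If one forgets the grading and asks only whether a bare index $n$ lies in $\I_G$, the condition becomes ``$\exists x\,\varphi_{e_n}(x)=x$'' together with the $\Pi_1$ clause ``no two distinct fixed points''. The existential part is $\Sigma_1$, not $\Pi_1$. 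To handle it, I would first try to show that the witness $x$ is computable from $n$ on $\I_G$, by searching for the first fixed point, which terminates because elements of $\I_G$ do have one. Under that reading, membership reduces to the $\Pi_1$ predicate $R(n,x)$ at the located $x$, which is what the coproduct decomposition expresses. I expect the written proof to need this identification of $\I_G$ with its graded form $\coprod_x\I_G(x)$ stated explicitly.
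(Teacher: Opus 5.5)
Your arguments for $\st_G(x)$ and $\I_G(x)$ are correct and match the paper's own (the paragraph before the proposition): enumerate $G$ and test $g(x)=x$; enumerate the complement of $\I_G(x)$ by finding either $g(x)\neq x$ or a second fixed point. Fixing an enumeration $(e_n)$ of $G$ is the right way to make ``r.e.'' precise. You also locate the real obstacle for $\I_G$ correctly, but your repair does not work.

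The paper defines $\I_G=\{g\in G \mid |\fix(g)|=1\}$ as a subset of $G$; the coproduct only records that the sets $\I_G(x)$ are pairwise disjoint. So replacing $\I_G$ by the relation $R$ changes the statement, and co-r.e.-ness of $R$ does not pass to its projection $\{n : \exists x\, R(n,x)\}$. Searching for the first fixed point does not help either. That search halts only on elements that have a fixed point. The fixed-point-free elements of $G$ lie in $G\setminus\I_G$, which is exactly the set a co-r.e.\ procedure must enumerate. Your reduction actually shows that $n\in\I_G$ iff $\varphi_{e_n}$ has some fixed point ($\Sigma_1$) and at most one ($\Pi_1$). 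So $\I_G$ is a difference of two r.e.\ sets, not a $\Pi_1$ set.

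No argument can close this, because the clause is false as stated. Identify $\bin$ with $\mathbb{N}$.
\begin{itemize}
\item If machine $n$ never halts, let $g_n$ swap $2k\leftrightarrow 2k+1$ for all $k$.
\item If it halts after exactly $s$ steps, let $g_n$ swap $2k\leftrightarrow 2k+1$ for $k<s$, fix $2s$, and swap $2s+1+2j\leftrightarrow 2s+2+2j$ for all $j\ge 0$.
\end{itemize}
Computing $g_n(y)$ only requires running $n$ for $\lfloor y/2\rfloor$ steps. So the $g_n$ are uniformly computable involutions, each with exactly one fixed point if $n$ halts and none otherwise. The group $G=\langle g_n : n\in\mathbb{N}\rangle\le\TB$ is r.e.\ via its enumeration by words, and the one-letter word $g_n$ sits at a computable position. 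Since $g_n\in\I_G$ iff $n$ halts, if $\I_G$ were co-r.e.\ relative to this enumeration, the halting set would be co-r.e., hence decidable, a contradiction.

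The paper's own justification also stops at $\I_G(x)$ and never treats $\I_G$. What is provable is that $R$ is co-r.e.\ and that $\I_G$ is a difference of r.e.\ sets. $\I_G$ becomes co-r.e.\ only under an extra hypothesis, e.g.\ that the fixed-point-free elements of $G$ form an r.e.\ set.
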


A prefix universal computer is a partial recursive function $U: 
\bin \times \bin \to \bin$ so that $U(p,x) = y$ where $p$ is prefix-free code for the function $U(p,\cdot) : \bin \to \bin$. 


A program encoding a symmetry is a prefix code $p$ so that $U(p, \cdot)$ is a symmetry. We call this a \emph{symmetry program}. The set of all symmetry programs is denoted by $U^{\TB}$.
\begin{equation}
    U^{\TB} = \{ p \mid U(p, \cdot) \in \TB \}
\end{equation}

By the discussion there $U^{\TB}$ is not a recursively enumerable language. Let $G \subset \TB$ be a r.e. subset. The set of symmetry programs for symmetries in $G$
\begin{equation}
    U^G = \{ p \mid U(p, \cdot) \in G \}
\end{equation}
is r.e. Let us denote the universal computer for $G$-symmetries by the same name.

\begin{proposition}
    \label{prop:symmcomp}
    Let $G$ be a r.e. symmetry group. There is an effective enumeration of $G$-symmetries. The universal computer for this is a p.r. function 
    \begin{equation}
    U^G: \bin \times \bin \to \bin 
    \end{equation}
    such that $U^G(p, \cdot) \in G$ when $p$ is halting. 
\end{proposition}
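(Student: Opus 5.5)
The plan is to build $U^G$ as a universal machine that runs an effective enumeration of $G$. First we need $G$ to be r.e. in the precise sense: there is a partial recursive $e:\mathbb{N}\to\bin$ whose range is a set of indices $\{e(0),e(1),\dots\}$ with $\{U(e(i),\cdot)\}_i = G$, and each such index computes a total bijection. From this, define $U^G(p,y)$ as follows. Read a self-delimiting code $p=\bar{i}$ of a natural number $i$, compute $e(i)$ (this halts exactly when $p$ is a valid code of an enumerated index), and then run $U(e(i),y)$. The domain of valid $p$ is prefix-free because the codes $\bar i$ are. Whenever $p$ is halting (meaning $e(i)$ is defined), $U^G(p,\cdot)=U(e(i),\cdot)\in G$. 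This gives the required p.r. function $U^G:\bin\times\bin\to\bin$.

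The key steps, in order, are these.

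\begin{enumerate}
\item Fix the meaning of ``$G$ r.e.'': the set $U^G=\{p \mid U(p,\cdot)\in G\}$ is r.e. This is the form the paper uses. Take a recursive enumeration $q_0,q_1,\dots$ of it.
\item Observe that the map $i\mapsto q_i$ is then a partial recursive enumeration of $G$-symmetries, surjective onto $G$. This is the ``effective enumeration''.
\item Compose it with $U$ through a prefix-free coding of indices, so that the first argument of $U^G$ stays a prefix code, as in the paper's convention for prefix universal computers.
\item Verify the output property. For halting $p$, $U^G(p,\cdot)$ is total and bijective, because $q_i\in U^G$ means $U(q_i,\cdot)\in G\le\TB$. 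Totality on the second argument is inherited directly; no search is involved once $q_i$ is obtained.
\end{enumerate}

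The main obstacle is definitional rather than technical. ``$G$ r.e.'' must be read as recursive enumerability of a set of programs, not of functions. If one only knows that $G$ is a countable group of computable bijections, there need not be a uniform enumeration at all. So I would state explicitly that r.e. means some r.e. set of indices computes exactly the elements of $G$, and that every index in it computes a total bijection. An alternative is to use only the enumeration $q_i$ directly, which avoids needing all $U$-indices of elements of $G$.

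A second, minor point is that the whole set $\{p\mid U(p,\cdot)\in G\}$ need not be r.e. even when $G$ is, since deciding whether an arbitrary program computes a given total function is $\Pi_2$. This is why I would build $U^G$ from the chosen enumeration rather than from all programs. The enumeration still yields every element of $G$, which is all that the universality statements later need.
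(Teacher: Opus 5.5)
Your construction is sound, but it is not the paper's route. The paper gives no separate argument: it asserts that for r.e.\ $G$ the full set $\{p \mid U(p,\cdot)\in G\}$ is r.e., and takes $U^G$ to be $U$ restricted to that set, so that $U^G$-programs are literally $U$-programs with their original lengths. Your closing caveat is right, and in fact understated: that set is never r.e.\ for nonempty $G$. By Rice--Shapiro, an r.e.\ index set containing an index of $\mathrm{id}\in G$ would also contain an index of a finite, hence non-total, subfunction of $\mathrm{id}$. So the paper's justification fails. The hypothesis you adopt in step~1 (``this is the form the paper uses'') is therefore vacuous, and it contradicts your last paragraph. Drop it and keep the reading of your first paragraph: an r.e.\ set of indices, each computing a total bijection, that jointly compute exactly $G$. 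Under that reading, $U^G(\bar i,y)=U(e(i),y)$ proves the proposition as stated.

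What the paper's version was meant to buy, and what your re-indexing gives up, is length-faithfulness. The proof of Theorem~\ref{thm:simul} obtains a $U^G$-program for $\s(x)$ by prefixing a constant to a $U$-program for $x$. That presupposes $U^G$-program lengths are comparable to $U$-program lengths. With a fixed code $\bar i$, the length is dictated by where $i$ sits in an arbitrary enumeration and has nothing to do with $K$. If you want your $U^G$ to support the later results, route the first argument through $U$: $U^G(q,y)=U(e(U(q)),y)$. The domain is still prefix-free, and $U^G(q,\cdot)\in G$ whenever $U(q)$ halts. After adding the indices output by the section to your r.e.\ index set, a constant-size prefix followed by a $U$-program for $x$ yields a $U^G$-program for an isolator of $x$ with only $O(1)$ overhead.
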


A symmetry program \emph{computes} $x$ when it is a program for an isolating symmetry for $x$. The set of all symmetry programs computing $x$ is denoted by $U^{\TB}_x$.
\begin{equation}
    U^{\TB}_x = \{ p \mid U(p, \cdot) \in \I_x\}
\end{equation}

The set of all symmetry programs in $U^G$ computing $x$ is denoted by $U^G_x$.
\begin{equation}
    U^{G}_x = \{ p \mid U^G(p, \cdot) \in \I_x\}
\end{equation}


\begin{remark}
Note that $U^{\TB}$ is not a universal computer for total recursive bijections. The set $\TB$ is not recursively enumerable. There is no Turing machine that accepts only the symmetry prefix programs in $U$.  
\end{remark}

\subsection{Simulations}
The fixed-point set of computable symmetries offers a mapping from $G \leq \TB$ to subsets of bit strings $G \to \mathcal{P}(\bin)$. This is non-computable in general. Restriction to isolating symmetries gives a computable map,
\begin{equation}
    \fix: \I_G \to \bin.
\end{equation} 
\begin{definition}
    \label{def:fix-ret}
    An $I$-section of $G \leq \TB$ is a computable right-inverse of $\fix$, i.e. a computable map 
    \begin{equation}
        \s: \bin \to \I_G
    \end{equation}
    such that $\fix\circ\s = \text{id}_{\bin}$. The existence of the section means $\I_G(x)$ is non-empty $\forall x \in \bin$.

    A symmetry group $G$ is \emph{fix-retractable} if it admits an $I$-section. 
\end{definition}



The existence of a $I$-section implies an important property of the program spaces. Any ordinary program for outputting a bit string can be simulated by a symmetry program of the same length (up to additive constant term). The $\fix$ map implies a similar property in the other direction, any symmetry program that isolates a bit string can be simulated by an ordinary program of the same length (up to a additive constant). The idea of codes being simulatable by codes of another type of the same length is formalized in the definition below. The subsequent proposition proves the connection claimed. 

\begin{definition}
    \label{def:simul}
    $U^{G}$ simulates $U$ if there is a injective computable function $\Phi:\bin \to \bin$ that maps programs computing $x$ to symmetry programs computing $x$.
    \begin{equation}
        \Phi(U_x) \subset U^G_x, \forall x \in \bin
    \end{equation}
    
    and $\ell(\Phi(p)) = \ell(p) + O(1)$ for all $p \in U_x$. 
    
    $U$ simulates $U^G$ if there is a injective computable function $\Psi:\bin \to \bin$ that maps symmetry programs to programs computing $x$ 
    \begin{equation}
        \Psi(U^G_x) \subset U_x, \forall x \in \bin
    \end{equation}
    and $\ell(\Psi(q)) = \ell(q) + O(1)$ for all $q \in U^G_x$.
\end{definition}

\begin{theorem}
    \label{thm:simul}
    If $G$ is a fix-retractable symmetry group, then $U$ and $U^G$ can simulate each other.
\end{theorem}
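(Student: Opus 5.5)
We prove the two simulations separately. In each case we build a prefix-free wrapper around a given program. The wrapper uses either the $I$-section $\s$ or the map $\fix$, and costs only a constant number of bits.

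\emph{$U^G$ simulates $U$.} Fix a computable $I$-section $\s:\bin\to\I_G$ as in Definition~\ref{def:fix-ret}. Because $\s$ is computable and lands in $G$, there is a computable function $c:\bin\to\bin$ with $U^G(c(x),\cdot)=\s(x)$. The program $c(x)$ is obtained from an index for $\s(x)$ in the effective enumeration of $G$-symmetries given by Proposition~\ref{prop:symmcomp}.

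The bare map $x\mapsto c(x)$ does not preserve length, so we shift the wrapping to the level of programs. Fix a constant prefix $r$ and let $\Phi(p)=rp$. We arrange that $U^G(rp,\cdot)$ is the symmetry $y\mapsto \s(U(p))(y)$: that is, run $U$ on $p$ to obtain $x$, then apply $\s(x)$.

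The subtle point is totality. $U^G(rp,\cdot)$ must lie in $G$, but if $U(p)$ diverges, $U^G(rp,\cdot)$ is not total. This is consistent with the paper's convention in Proposition~\ref{prop:symmcomp} that $U^G(q,\cdot)\in G$ \emph{when $q$ is halting}. So we treat $rp$ as a halting symmetry program exactly when $U(p)$ halts.

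Then $\fix(U^G(rp,\cdot))=\fix(\s(U(p)))=U(p)=x$, hence $\Phi(U_x)\subset U^G_x$. Moreover $\Phi$ is injective, $\ell(\Phi(p))=\ell(p)+\ell(r)$, and prefix-freeness is preserved because $r$ is a fixed prefix.

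To realise $r$ legitimately inside the universal machine $U^G$, we use the universality of $U^G$ among partial recursive enumerations of $G$-programs. The machine $V(p,y)=\s(U(p))(y)$ is a partial recursive enumeration of elements of $G$, and $U^G$ simulates it with a constant-length prefix. This step is the main obstacle. The paper's definition of $U^G$ is loose, so we make explicit the assumption that $U^G$ is universal for r.e.\ families of $G$-symmetries. That assumption is precisely what the fix-retractability hypothesis is meant to feed into.

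\emph{$U$ simulates $U^G$.} This direction is easier. Given a symmetry program $q\in U^G_x$, the isolated point $x=\fix(U^G(q,\cdot))$ is found by dovetailing: evaluate $U^G(q,y)$ for $y=\epsilon,0,1,00,\dots$ until some $y$ with $U^G(q,y)=y$ appears. Since $U^G(q,\cdot)\in\I_G(x)$, the only such $y$ is $x$, and the search halts. This is exactly the computability of $\fix$ on $\I_G$.

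Let $M$ be the prefix machine that on input $q$ performs this search and outputs the first fixed point found. By universality of $U$ there is a constant string $t$ with $U(tq)=M(q)$, so set $\Psi(q)=tq$. Then $\Psi$ is injective, $\ell(\Psi(q))=\ell(q)+\ell(t)$, and $\Psi(U^G_x)\subset U_x$.

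Combining the two constructions gives mutual simulation in the sense of Definition~\ref{def:simul}. The constants are $\ell(r)$ and $\ell(t)$, which depend only on $\s$, $U$ and $U^G$.
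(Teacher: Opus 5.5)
Your proposal is correct and follows the paper's proof. Both directions lift $\fix$ and the section $\s$ to program transformers that prepend a fixed constant-length prefix: $\Psi(q)=tq$ searches for the unique fixed point of $U^G(q,\cdot)$, and $\Phi(p)=rp$ runs $p$ and then applies $\s$, which gives injectivity and the $O(1)$ length bound. The universality assumption you make explicit for $U^G$ is the same one the paper uses tacitly when it concatenates $p$ with ``the constant length prefix code for $U_{\s}$''; under the paper's literal definition $U^G=\{p \mid U(p,\cdot)\in G\}$ it follows from the two-argument universality of $U$, since $U(rp,\cdot)=\s(U(p))\in G$ whenever $U(p)$ halts.
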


\begin{proof}
    \begin{enumerate}
    \item ($U$ simulates $U^G$):
    The map $\fix: \I_G \to \bin$ lifts to a computable map between program spaces:
    \begin{equation}
        U_{\fix}: \{q \mid U^G(q) \text{ halts and} \in \I_G\} \to \{ p \mid U(p) \text{ halts }\}
    \end{equation}
    such that $\forall x \in \bin$, $U_{\fix}(U^G_x) \subset U_x$. The mapping $U_{\fix}$ is defined as follows: given a program $q$ for an $x$-isolator, run $q$, get the isolator $g \in \I_G(x)$, compute the fixed point by computing $g(x)$ through enumerating $x \in \bin$. This will stop eventually because $g$ is an isolator for $x$. The program $U_{\fix}(q)$ is therefore a program that produces $x$. The prefix code for $U_{\fix}(q)$ is obtained by concatenating the prefix code with $q$ with the code for $U_{\fix}$ plus a constant overhead code. This is injective and the length,
    \begin{equation}
        \ell(U_{\fix}(q)) \leq \ell(q) + O(1).
    \end{equation}

    \item ($U^G$ simulates $U$):
    The section $\s: \bin \to \I_G$ lifts to a computable map between program spaces:
    \begin{equation}
        U_{\s}: \{ p \mid U(p) \text{ halts }\} \to \{q \mid U^G(q) \text{ halts and} \in \I_G\}
    \end{equation}
    such that $\forall x \in \bin$, $U_{\s}(U_x) \subset U^G_x$. The mapping $U_{\s}$ is defined as follows: given a program $p$ that produces $x$, run $p$, get $x$, apply $\s$ to get $\s(x) \in \I_G(x)$. The program $U_{\s}(p)$ is program for an isolator for $x$ by concatenating the prefix code for $p$ with the constant length prefix code for $U_{\s}$. This makes $U_{\s}$ injective and the length,
    \begin{equation}
        \ell(U_{\s}(p)) \leq \ell(p) + O(1).
    \end{equation}

    \end{enumerate}
\end{proof}

We give an alternate characterization of fix-retractable symmetry groups. The presence of an $I$-section implies and is implied by the existence of a subgroup of $G$ which is r.e. and for which the set of isolators is also r.e.

\begin{theorem} 
    \label{thm:alg}
    The following properties of a symmetry group $G \leq \TB$ are equivalent.
    \begin{enumerate}
        \item $G$ is fix-retractable 
        \item $\exists H \leq G$ such that
        \begin{enumerate}
            \item $H$ can isolate all $x \in \bin$
            \item $H$ and $\I_H$ are r.e.
        \end{enumerate}
    \end{enumerate}
    Therefore for a symmetry group $G$ satisfying 2, $U^G$ can simulate $U$.
\end{theorem}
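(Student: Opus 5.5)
We prove the two implications separately and then obtain the final sentence of Theorem~\ref{thm:alg} from Theorem~\ref{thm:simul}. Throughout, a symmetry is handled through a program (an index) that computes it, so ``computable map $\s:\bin\to\I_G$'' means a total computable function returning, for each $x$, a program for an element of $\I_G(x)$.

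\emph{(2) $\Rightarrow$ (1).} Let $H \leq G$ be as in (2). Since $\I_H$ is r.e., fix an effective enumeration $h_0, h_1, h_2, \dots$ of programs for the elements of $\I_H$. By (a), every $x$ has some isolator in $H$, so some $h_i$ lies in $\I_H(x)$. Define $\s(x)$ by dovetailing: run $h_0,h_1,\dots$ on the input $x$, interleaving the enumeration with these evaluations, and output the first $h_i$ found with $h_i(x)=x$. Each $h_i$ is total, so every evaluation halts, and the search terminates by (a). The output $h_i$ isolates some string and fixes $x$, so its unique fixed point is $x$. Hence $h_i \in \I_H(x) \subseteq \I_G(x)$ and $\fix\circ\s=\mathrm{id}$. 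Thus $\s$ is an $I$-section and $G$ is fix-retractable. Only the r.e.-ness of $\I_H$ is needed here. The r.e.-ness of $H$ is used for $U^H$ to make sense, via Proposition~\ref{prop:symmcomp}.

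\emph{(1) $\Rightarrow$ (2).} Let $\s$ be an $I$-section. The natural candidate is $H=\langle \s(x) : x\in\bin\rangle$, the subgroup generated by the image of $\s$.
\begin{itemize}
\item $H$ isolates every $x$, because $\s(x)\in H$.
\item $H$ is r.e.: enumerate all finite words in the generators $\s(x)^{\pm1}$. Inverses of computable bijections are computable by searching for preimages, so each word gives a program for an element of $H$.
\item $\I_H$ is r.e. This is the main obstacle. By Proposition~\ref{prop:symmalg}, for a general r.e.\ group the isolators are only co-r.e., and a product of isolators need not be an isolator.
\end{itemize}
My first attempt would be to show that $\I_H$ coincides, on the level of programs, with an r.e.\ set, namely the set of programs $\{\s(x)\}$ together with suitable conjugates. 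If that fails, I would relax the interpretation of ``$\I_H$ r.e.'' to mean that some r.e.\ set of programs meets every $\I_H(x)$ and is contained in $\I_H$. That is exactly what the enumeration in (2) $\Rightarrow$ (1) uses, and $\{\s(x):x\in\bin\}$ is trivially r.e.\ since $\s$ is computable.

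If the exact statement is required, I would instead try to pick $H$ so that its isolators are decidable from a description. One option is to conjugate a fixed isolator of a reference string by computable transpositions, making $H$ a group of finitary permutations composed with a fixed fixed-point-free shift. I expect this step to need the most care, and it may be where the paper's intended reading of ``r.e.'' matters.

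Finally, given (2), the implication (2) $\Rightarrow$ (1) shows $G$ is fix-retractable. Theorem~\ref{thm:simul} then shows that $U^G$ simulates $U$, which proves the last sentence of Theorem~\ref{thm:alg}.
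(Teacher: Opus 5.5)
Your proof of (2)$\Rightarrow$(1) and your derivation of the final sentence from Theorem~\ref{thm:simul} are correct, and they follow the paper's argument. You enumerate $\I_H$, test $h(x)=x$, and uniqueness of the fixed point gives $h\in\I_H(x)\subseteq\I_G(x)$. For (1)$\Rightarrow$(2) you pick the paper's subgroup $H=\langle \s(x): x\in\bin\rangle$. You correctly show that it isolates every string and is r.e.\ (words in the generators, with inverses found by preimage search). The gap is the step you flag yourself: nothing shows that $\I_H$ is r.e., and this is a real obstruction rather than a technicality. Membership in $\I_H$ is a priori a $\Sigma_2$ condition: some point is fixed and every other point is not. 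It is uncontrolled already for two generators. For $x\neq y$, the fixed points of $\s(x)\s(y)^{-1}$ are in bijection with the points $w$ where $\s(x)(w)=\s(y)(w)$. Fix-retractability says nothing about how many such coincidences two isolators produced by a computable section have. So option (i) is a hope, not an argument.

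Options (ii) and (iii) do not close the gap either. Option (ii) changes the statement. An r.e.\ family of isolators meeting every $\I_H(x)$ is exactly what the search in (2)$\Rightarrow$(1) uses, and $\{\s(x)\}$ trivially provides one, so under that reading the equivalence becomes a tautology rather than the theorem. Option (iii) does work inside groups such as $\TB$. Let $\sigma$ be a computable bijection with no finite orbits and $\pi$ a finitary permutation. For $k\neq 0$, every fixed point of $\pi\sigma^k$ lies in the finite computable set $\sigma^{-k}(\mathrm{supp}\,\pi)$. So isolators in the group generated by finitary permutations and $\sigma$ are decidable from a description $(\pi,k)$, and $\tau\sigma$ isolates $x$ when $\tau$ is the transposition of $x$ and $\sigma(x)$. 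But $H$ must be a subgroup of the given $G$, and a fix-retractable $G$ need not contain any non-trivial finitary permutation. For example, take the affine group $z\mapsto az+b$ of $\mathbb{Q}$, transported to $\bin$ by a computable bijection. It is fix-retractable via $x\mapsto(z\mapsto 2z-x)$, yet every non-identity element has at most one fixed point. Finally, the paper's own proof of this direction stops right after defining $H$ as the subgroup generated by the image of the section, without verifying (b), so it does not supply the missing step either. As stated, (1)$\Rightarrow$(2) needs an additional idea (or an extra hypothesis on $G$) that your proposal does not provide.
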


\begin{proof}
\begin{enumerate}
    \item[]
    \item ($\Leftarrow$)
    Let $U^H$ and $U^{\I_H}$ be effective enumerations of $H$ and $\I_H$. Define the I-section of $G$, $\s: \bin \to \I_G$ as follows: 

    Enumerate isolators in $H$ by running $U^{\I_H}$. For every $h \in \I_H$ check if it belongs to $\I_H(x)$ by computing $h(x)$. This will eventually halt because of the first assumption about $H$. 

    \item ($\Rightarrow$)
    Let $\s: \bin \to \I_G$ be an I-section for $G$. Consider the induced map on programs (see proof of \ref{thm:simul})
    \begin{equation}
        U_{\s}: \{ p \mid U(p) \text{ halts }\} \to \{q \mid U^G(q) \text{ halts and} \in \I_G\}
    \end{equation}
    which satisfies $\s(U_x) \subseteq U^G_x$ for all $x \in \bin$. 

    Consider the set of isolators in $G$ defined by the image of $\s$:
    \begin{equation}
        S = \{ U(q, \cdot) \mid q \in \s(U_x) \} \subseteq \I_G(x)
    \end{equation}
    Let $H := \langle h : h \in S \rangle$ be the subgroup of $G$ generated by $S$. 

\end{enumerate}
\end{proof}

\section{Algorithmic Probability from Symmetries}

\subsection{Symmetry Priors}




\begin{definition}
The $G$-\emph{symmetry prior} of a string $x$ with respect to a computer $U$ is the probability that a random symmetry program computes $x$. 
\begin{equation}
    \ms(x) = \sum_{p \in U_x^G} \frac{1}{2^{\ell(p)}}
\end{equation}
\end{definition}

The Solomonoff Prior $\m(x)$ is \emph{lower semi-computable}. The proof relies on the fact that the set of programs $U_x$ is recursively enumerable. The direct analogy fails to work with the symmetry prior since the set of symmetry programs computing $x$ is not r.e.

In the sequel we show that the symmetry prior is equal to the Solomonoff prior up to a constant term when every program can be \emph{simulated} by a symmetry program. But first, we have to make the idea of simulation precise. 

\subsection{Universal Semi-measures}

\begin{definition}
A function $\mu: \bin \to \mathbb{R}$ is a probability semi-measure if 
\begin{enumerate}
    \item $\mu(\epsilon) \leq 1$
    \item $\mu(x) \geq \mu(x0) + \mu(x1)$
\end{enumerate}
A semi-measure is lower semi-computable if the function $\mu$ is lower semi-computable. 

Let $\mathcal{M}$ be a class of discrete semi-measures. A semi-measure $P_0$ is \emph{universal} (or maximal) for $\mathcal{M}$ if $P_0 \in \mathcal{M}$, and for all $P \in \mathcal{M}$ there exists a constant $c_P$ such that for all $x \in \bin$ we have $c_PP_0(x) \geq P(x)$, where $c_P$ possibly depends on $P$ but not on $x$. 
\end{definition}

The Solomonoff prior $\m(x)$ and $2^{-K(x)}$ are lower semi-computable semi-measures. The condition of being semi-measures is a consequence of teh Kraft inequality of prefix codes. Lower semi-computability of both semi-measures come from the fact that $U_x$ is recursively enumerable. 

The prior $\m(x)$ is universal for the following reason. For any lower semi-computable semi-measure $\mu$, there is a computable enumeration of programs that realize $\mu$. The universal machine $U$ can simulate all such programs. The sum $\m(x)$ includes all programs, so it dominates each $\mu$ up to a constant (the weight assigned to the program that enumerates $\mu$).

The classical coding theorem implies that $2^{-K(x)}$ is universal: since $\m(x) \asymp 2^{-K_U(x)}$, and $\m(x)$ is universal, $2^{-K_U(x)}$ inherits universality.

\begin{proposition}
    \label{thm:lsc}
    If $G$ is fix-retractable, the symmetry prior $\ms(x)$ is a lower semi-computable semi-measure.
\end{proposition}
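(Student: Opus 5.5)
The proof has two parts: show that $\ms$ is a semi-measure, then show that it is lower semi-computable. For the semi-measure part I read the paper's definition as the discrete semi-measure condition, $\sum_x \ms(x) \leq 1$, which is what the Kraft-type argument for $\m$ actually establishes. The sets $U^G_x$ for distinct $x$ are pairwise disjoint, because an isolating symmetry has exactly one fixed point. All of them are subsets of the prefix-free domain of $U^G$. Kraft's inequality therefore gives
\begin{equation}
\sum_{x} \ms(x) = \sum_{x}\sum_{p \in U^G_x} 2^{-\ell(p)} \leq \sum_{p \in \mathrm{dom}\, U^G} 2^{-\ell(p)} \leq 1 .
\end{equation}
This part uses nothing about $G$ beyond Proposition~\ref{prop:symmcomp}.

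Lower semi-computability is the real issue. By Proposition~\ref{prop:symmalg}, the set $U^G_x$ is only co-r.e., so we cannot simply enumerate its members and sum their weights the way one does for $\m$. The plan is to work with the r.e. proxy supplied by fix-retractability. By Theorem~\ref{thm:alg}, there is a subgroup $H \leq G$ with $H$ and $\I_H$ both r.e., and $H$ isolates every string. I would fix an enumeration of $\I_H$, that is, of programs $q$ together with a certificate that $U^G(q,\cdot)$ lies in $\I_H$.

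Given such a program, we find its fixed point by dovetailing: compute $U^G(q,y)$ for all $y$ in parallel until some $y$ satisfies $U^G(q,y)=y$. This search halts because the symmetry is an isolator, exactly as in the $U_{\fix}$ construction of Theorem~\ref{thm:simul}. Define the approximation $\mu_t(x)$ as the sum of $2^{-\ell(q)}$ over those $q$ enumerated within $t$ steps whose fixed point has been found and equals $x$. Then $\mu_t(x)$ is computable and nondecreasing in $t$, so its limit $\mu(x)=\sum_{q\in U^H_x}2^{-\ell(q)}$ is lower semi-computable.

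The main obstacle is the gap between $\mu$ and $\ms$. Isolators in $G$ outside the enumerated r.e. part are never counted, so the proxy can only bound $\ms$ from below. Here is what I would try first. Choose the coding of symmetry programs so that only programs with an r.e. certificate of isolation are counted, taking $U^G$ on its isolating part to be the machine built from $\I_H$ together with the section $\s$. With that choice, $\ms$ coincides with $\mu$ and the proof is complete.

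If the definition of $\ms$ is instead taken literally over all of $G$, this identification fails. The fallback is to prove the weaker statement $\ms(x) \asymp \mu(x)$ up to multiplicative constants, which is all that the later use in Theorem~\ref{thm:universal} requires. The lower bound $\mu \leq \ms$ is immediate. For the upper bound, the map $U_{\fix}$ of Theorem~\ref{thm:simul} gives $\ms(x) \leq 2^{c}\m(x)$, and the map $U_{\s}$ gives $\m(x) \leq 2^{c'}\mu(x)$. I would flag explicitly that exact lower semi-computability of the literal $\ms$ appears to require this convention on $U^G$.
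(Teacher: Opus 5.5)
Your fallback is, in substance, the paper's proof, and the gap you flag is real and is not closed by the paper either. The paper composes the two simulations into $\Theta=\s\circ\fix$ and takes $S_x=\Theta(U^G_x)\subseteq U^G_x$. It argues that $S_x$ is r.e.\ and carries at least a $2^{-c}$ fraction of the mass of $U^G_x$. It then enumerates $S_x$, getting a nondecreasing computable sequence whose limit lies between $2^{-c}\ms(x)$ and $\ms(x)$, and calls this lower semi-computability of $\ms$.

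That is exactly your $\ms\asymp\mu$ with $\mu$ lower semi-computable. Since $U^G_x$ is only co-r.e., the mass outside the enumerated part is never approximated from below. An r.e.\ subset of constant relative density therefore gives lower semi-computability only up to a multiplicative constant. Your ``first try'' does not close the gap either, because it changes the definition of $\ms$ rather than proving the statement about it.

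Your remark that the weaker version is all Theorem~\ref{thm:universal} needs is only half right. The estimates $\ms\asymp\m$ and $-\log_2\ms(x)=K(x)+O(1)$ follow from the two simulations alone and use no semi-computability. The claim that $\ms$ is a universal \emph{lower semi-computable} semi-measure, however, needs $\ms$ itself to lie in that class, and that is exactly what remains unproved.

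Three smaller points.

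\textbf{Theorem~\ref{thm:alg}.} Your detour uses the direction (1)$\Rightarrow$(2). The paper's proof of that direction stops after defining $H=\langle S\rangle$ and never shows that $\I_H$ is r.e. This is not obvious, since products of generators can be isolators and isolation is only a co-r.e.\ condition.

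\textbf{You do not need $H$.} The set $U_{\s}(U_x)$ is r.e.\ uniformly in $x$ and lies in $U^G_x$. By injectivity and the two simulations, its mass is at least $2^{-c_1}\m(x)\geq 2^{-c_1-c_2}\ms(x)$. Taking this as your $\mu$ makes your bound $\m(x)\leq 2^{c'}\mu(x)$ automatic; with $\mu$ built from $\I_H$ you would have to check that the programs $U_{\s}(p)$ are among those enumerated. It also repairs a step in the paper: the claim that $\Theta(U^G_x)$ is r.e.\ does not follow, because that set is the image of the non-r.e.\ set $U^G_x$. Only its superset $U_{\s}(U_x)$ is evidently r.e.

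\textbf{Semi-measure property.} The paper's proof does not address it at all. Your Kraft-inequality argument, via disjointness of the $U^G_x$ and prefix-freeness of symmetry programs under the discrete reading of the definition, is correct and supplies it.
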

\begin{proof}
Compose the two simulations
\begin{equation}
    \Theta := \s \circ \fix : U^G_x \to U^G_x
\end{equation}
This map takes any symmetry program $p \in U^G_x$, converts it into a standard program $\fix(p)$ and converts it back into a symmetry program $\s(\fix(x))$. 
The length of the program $\Theta(p)$ is no longer than $p$ by a constant term
\begin{equation}
    \ell(\Theta(p)) = \ell(\s(\fix(p)) \leq \ell(p) + c_{\Psi} + c_{\Phi}
\end{equation}
Let $S_x = \Theta(U^G_x) \subseteq U^G_x$ be the set of symmetry programs that are in the image of $\Theta$. Two important properties of $S_x$:

\begin{enumerate}
    \item $S_x$ is recursively enumerable. This is so because $U_x$ is recursively enumerable and $\Phi$ is a computable map. Therefore $S_x$ is the image of a r.e set under a computable map. 
    \item $S_x$ is probability dense in $U^G_x$. 
        \begin{equation}
            \sum_{q \in S_x} \frac{1}{2^{\ell(q)}} \geq \sum_{p \in U_x^G} \frac{1}{2^{\ell{\Theta(p)}}} = 2^{-c_{\theta}}\sum_{p \in U_x^G} \frac{1}{2^{\ell(p)}} = 2^{-\Theta} \ms(x)
        \end{equation}
    So $S_x$ carries a constant fraction of the total probability mass of $U^G_x$.
\end{enumerate}
To approximate the symmetry prior from below we only need to enumerate a probability-dense r.e. subset $S_x \subseteq U^G_x$. Since $S_x$ is r.e., we can enumerate its elements and sum their probabilities, giving a non-decreasing sequence that converges to the value at least $2^{-c_{\Theta}} \ms(x)$.
\end{proof}

The proof of lower semi-computability of the symmetry prior \ref{thm:lsc} inspires the following definition. 

\begin{definition}
    For every $x \in \bin$, let $\mathcal{C}_x$ be a set of codes with prefix-free encoding and a probability semi-measure $\mu$. A subset $
    \mathcal{D}_x \subseteq \mathcal{C}_x$ is called \emph{uniformly probability dense} if:
    \begin{enumerate}
        \item $\mathcal{D}_x$ is recursive enumerable
        \item There exists a constant $c > 0$ such that $\mu(\mathcal{D}_x) \geq c.\mu(\mathcal{C}_x)$ for all $x$. 
    \end{enumerate}
\end{definition}

In the context of symmetry programs, $\mathcal{C}_x = U^{\TB}_x$, and a uniformly probability-dense subset $S_x \subset U_x^{\TB}$ provides a lower semi-computable approximation to the symmetry prior $\ms(x)$.

\begin{theorem}
\label{thm:universal}
If $G$ is a fix-retractable symmetry group then the symmetry prior is equal to the Solomonoff prior up to a multiplicative constant. 

\begin{equation}
    \ms(x) \asymp \m(x).
\end{equation}

Therefore, the symmetry prior is a \emph{universal} lower semi-computable semi-measure on binary strings. Also as a consequence of the classical coding theorem
 \begin{equation}
    -\log_2 \ms(x)= K(x) + O(1)
    \end{equation}

\end{theorem}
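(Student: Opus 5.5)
The plan is to prove the two multiplicative inequalities $\ms(x) \leq c_1 \m(x)$ and $\m(x) \leq c_2 \ms(x)$ separately. Each comes from pushing one of the two simulations of Theorem~\ref{thm:simul} through the defining sums. Universality and the geometric coding theorem then follow formally from the classical facts recalled above.

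For the lower bound $\m(x) \leq c_2\,\ms(x)$, I will use the map $U_{\s}$ constructed from the $I$-section in Theorem~\ref{thm:simul}. It is injective, it satisfies $U_{\s}(U_x) \subseteq U^G_x$, and $\ell(U_{\s}(p)) \leq \ell(p) + c_{\s}$. Because the images of distinct $p$ are distinct elements of $U^G_x$, we get
\begin{equation}
\ms(x) = \sum_{q \in U^G_x} 2^{-\ell(q)} \geq \sum_{p \in U_x} 2^{-\ell(U_{\s}(p))} \geq 2^{-c_{\s}} \sum_{p \in U_x} 2^{-\ell(p)} = 2^{-c_{\s}}\,\m(x).
\end{equation}
This is statement 2 of Theorem~\ref{thm:gct}.

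The upper bound $\ms(x) \leq 2^{c_{\fix}}\,\m(x)$ is the symmetric argument with $U_{\fix}$. It is injective, it maps $U^G_x$ into $U_x$, and it satisfies $\ell(U_{\fix}(q)) \leq \ell(q) + c_{\fix}$. Hence $\m(x) \geq \sum_{q\in U^G_x} 2^{-\ell(q)-c_{\fix}} = 2^{-c_{\fix}}\ms(x)$. Combining the two bounds gives $\ms \asymp \m$.

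Universality then follows. By Proposition~\ref{thm:lsc}, $\ms$ is a lower semi-computable semi-measure. Summability also follows directly from $\ms \leq 2^{c_{\fix}}\m$, since $\sum_x \m(x)\leq 1$. For any lower semi-computable semi-measure $\mu$ there is a constant $c_\mu$ with $c_\mu \m \geq \mu$, and therefore $2^{c_{\s}} c_\mu\, \ms \geq \mu$. Finally, taking $-\log_2$ of $\ms \asymp \m$ and applying Theorem~\ref{thm:coding} gives $-\log_2 \ms(x) = -\log_2 \m(x) + O(1) = K(x) + O(1)$.

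The main obstacle I expect is the injectivity and prefix-freeness of the simulation maps. The weight comparison needs the images $U_{\s}(p)$ to be distinct codes lying in the prefix-free domain of $U^G$, and they must belong to $U^G_x$ itself, not merely compute some isolator for $x$. I will handle this by taking $U_{\s}(p) = r\,p$, a fixed prefix-free header $r$ followed by $p$. Injectivity and prefix-freeness are then inherited from the domain of $U$. I will also assume that the enumeration $U^G$ is chosen so that it interprets the header $r$ by running $p$ and outputting $\s(U(p))$. This is a standard universality assumption on the machine $U^G$ of Proposition~\ref{prop:symmcomp}, and I will state it explicitly. The same device handles $U_{\fix}$. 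There the only further point is that $U$ can recover $x$ from the isolator by searching for its unique fixed point, which is computable because each $U^G(q,\cdot)$ is total.
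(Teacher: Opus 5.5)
Your proposal follows the paper's proof: both multiplicative bounds come from summing over the injective images of the program-level simulations $U_{\s}$ and $U_{\fix}$ of Theorem~\ref{thm:simul}, and universality and $-\log_2\ms(x)=K(x)+O(1)$ are then read off from the universality of $\m$ and Theorem~\ref{thm:coding}; your fixed-header construction and your explicit universality assumption on $U^G$ make precise what the paper leaves implicit. One caveat, which you share with the paper: exact lower semi-computability of $\ms$, needed for $\ms$ to belong to the class it is claimed universal for, rests entirely on Proposition~\ref{thm:lsc}, and that argument only produces a lower semi-computable function within a constant factor of $\ms$ (for instance the weight of the r.e.\ set $U_{\s}(U_x)$), not $\ms$ itself.
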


\begin{proof}
\begin{enumerate} 
    \item[]  
    \item ($\ms(x) \geq C. \m(x)$) 
    \item[] From the simulation $\s: U \to U^G$, every program $p$ for $x$ gives a symmetry program $\s(p)$ with $\ell(\s(p)) = \ell(p) + c_1$, where $c_1$ is the constant overhead of the simulation. Therefore: 
    \begin{equation}
        \ms(x) = \sum_{p \in U_x^G} \frac{1}{2^{\ell(p)}} \geq \sum_{p \in U_x} \frac{1}{2^{\ell({\s(p)})}} = 2^{-c_1}\sum_{p \in U_x} \frac{1}{2^{\ell(p)}} = 2^{-c_1} \m(x)
    \end{equation}
    \item ($\m(x) \geq C. \ms(x)$)
    \item[]
        From the simulation $\fix: U^G_x \to U_x$, every symmetry program $q$ for $x$ gives a program $\fix(q)$ for $x$ with $\ell(\fix(q)) = \ell(q) + c_2$ where $c_2$ is the constant overhead of the simulation $\fix$. Therefore:
        \begin{equation}
        \m(x) = \sum_{p \in U_x} \frac{1}{2^{\ell(p)}} \geq \sum_{q \in U^G_x} \frac{1}{2^{\ell(\fix(q))}} = 2^{-c_2}\sum_{q \in U^G_x} \frac{1}{2^{\ell(q)}} = \ms(x)
        \end{equation}

\end{enumerate}
\end{proof}

\subsection{Examples}
\subsubsection{$G = \TB$}
\begin{theorem}\cite{TrejoKreinovichLongpre2000}
    \label{thm:TB}
    $\TB$ is fix-retractable. Hence the GCT is true for the full symmetry group of all computable symmetries.
\end{theorem}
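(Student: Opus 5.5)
To show that $\TB$ is fix-retractable, I will write down an explicit computable section $\s:\bin\to\I_{\TB}$, i.e.\ a uniform procedure that, given $x$, outputs a program for a total computable bijection whose only fixed point is $x$. The GCT for $\TB$ then follows from Theorem~\ref{thm:simul} and Theorem~\ref{thm:universal}, so only the construction needs proof. Since $\TB$ is not r.e., I will not use the characterisation in Theorem~\ref{thm:alg}; the section will be built by hand, uniformly in $x$.

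The plan is to fix one computable bijection $\sigma$ of $\bin$ with no fixed points at all, and then conjugate it by a computable transposition. Identify $\bin$ with $\mathbb{N}$ via the standard length-lexicographic bijection $\nu$, and let $\sigma$ act on $\mathbb{N}$ as a permutation that swaps $0$ with $1$, $2$ with $3$, and so on, i.e.\ $n\mapsto n\oplus 1$. This is a total computable involution with no fixed points. Now take $x$ with $\nu(x)=n$. Define $f_x$ to fix $n$, and on $\mathbb{N}\setminus\{n\}$ to act as a fixed-point-free permutation that is explicitly computable from $n$. One concrete choice is to let $\tau_n$ be the order-preserving bijection $\mathbb{N}\to\mathbb{N}\setminus\{n\}$ (namely $k\mapsto k$ for $k<n$ and $k\mapsto k+1$ for $k\ge n$), and to set $f_x=\tau_n\sigma\tau_n^{-1}$ on $\mathbb{N}\setminus\{n\}$ and $f_x(n)=n$.

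The properties to check are the following. First, $\tau_n$ and $\tau_n^{-1}$ are computable uniformly in $n$, so $f_x$ is total and computable, and a program for it is obtained effectively from $x$. Second, $f_x$ is a bijection, because it is the disjoint union of a bijection of $\mathbb{N}\setminus\{n\}$ and the identity on $\{n\}$. Third, $\fix(f_x)=\{x\}$: a fixed point $m\neq n$ would mean $\sigma(\tau_n^{-1}m)=\tau_n^{-1}m$, which is impossible. Transporting back along $\nu$ gives $\s(x)=\nu^{-1}f_x\nu\in\I(x)$, with $\fix\circ\s=\mathrm{id}$ by construction.

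The step that needs care, and the one I see as the main obstacle, is the formal meaning of ``computable map $\bin\to\I_G$'' in Definition~\ref{def:fix-ret}. The section must output \emph{codes}: by the s-m-n theorem there is a total computable $x\mapsto p_x$ with $U(p_x,\cdot)=f_x$. I would also check that these codes are accepted as symmetry programs in the sense of $U^{\TB}$, even though $U^{\TB}$ itself is not r.e. This is fine, because only membership of each individual $p_x$ is needed, and the Remark after Proposition~\ref{prop:symmcomp} means the simulation argument of Theorem~\ref{thm:simul} should be rephrased using the r.e.\ family $\{p_x\}$ rather than an enumeration of all of $\TB$.
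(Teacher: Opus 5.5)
Your proposal is correct and follows essentially the same route as the paper: it constructs explicitly, uniformly in $x$, a total computable involution whose only fixed point is $x$, and then reads off the GCT from Theorems~\ref{thm:simul} and~\ref{thm:universal}. The only difference is the choice of involution. The paper pairs each $y$ with its last-bit flip $y^*$ and patches the leftover points by swapping $x^*$ with $\epsilon$, whereas you conjugate the pair-swap $n\mapsto n\oplus 1$ onto $\mathbb{N}\setminus\{n\}$; your version has the small advantage of needing no separate case for $x=\epsilon$, where $x^*$ is undefined.
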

\begin{proof}An I-section of $\TB$ would be a computable map $\s: \bin \to \I$ such that $\s(x) \in \I_x$. We give an explicit construction. For every $x \in \bin$ let $s_x$ be a $x$ isolator defined as follows:
    \begin{enumerate}
        \item $s_x(x) = x$
        \item $s_x(x^*) = \epsilon$ and $s_x(\epsilon) = x^*$
        \item $\forall y \neq x, s_x(y) = y^*$, where $y^*$ is $y$ with its last bit flipped.
    \end{enumerate}
    The mapping $x \mapsto s_x$ is computable.
\end{proof}

\section{Algebraic Characterizations}
\subsection{Galois Connections}
Let $G \leq \TB$ be a symmetry group. 

\begin{definition}
The stabilizer of a subset $X \subset \{0,1\}^*$ is the subgroup of $G$ that fixes $X$
\begin{equation}
    \st(X) = \{f \in G\ \mid X \subset \fix(f)\}
\end{equation}
The fixed points of a subgroup $H \leq G$ is 
\begin{equation}
    \fix(H) = \{x \in \{0,1\}^* \mid h(x) = x \forall h \in H \}
\end{equation}
\end{definition}

\begin{theorem}
\label{thm:galois}
The pair 
\begin{equation}
(\fix, \st): \mathcal{L}(G) \to P(\bin) 
\end{equation}
forms an order-reversing Galois connection between the the subgroup lattice of $G$ and the subset lattice of $\{0,1\}^*$. That is for every $H \leq G$ and $X \subset \{0, 1\}^*$
\begin{equation}
    X \subset \fix(H) \iff H \leq \st(X)
\end{equation}
\end{theorem}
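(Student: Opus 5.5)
The plan is to unwind both sides of the equivalence to the same elementwise condition, namely that every $h \in H$ fixes every $x \in X$. First I check that the two maps are well defined. For any $X \subset \bin$, the set $\st(X) = \{f \in G \mid f(x) = x \ \forall x \in X\}$ is a subgroup of $G$. It contains the identity. If $f, g$ fix each $x \in X$, then $fg(x) = f(x) = x$, and applying $f^{-1}$ to $f(x)=x$ gives $f^{-1}(x) = x$; since $G \leq \TB$ consists of bijections, $f^{-1}$ exists and lies in $G$. For any $H \leq G$, $\fix(H)$ is simply a subset of $\bin$. So $\st: P(\bin) \to \mathcal{L}(G)$ and $\fix: \mathcal{L}(G) \to P(\bin)$ are legitimate maps between the two lattices.

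Next I prove the equivalence directly. Suppose $X \subset \fix(H)$. Take $h \in H$ and $x \in X$. Then $x \in \fix(H)$, so $h(x) = x$. Hence $X \subset \fix(h)$ and $h \in \st(X)$, which gives $H \leq \st(X)$; this is containment of subgroups, since $H$ is already a subgroup of $G$. Conversely, suppose $H \leq \st(X)$. Take $x \in X$ and $h \in H$. Then $h \in \st(X)$, so $h(x) = x$. Since $h$ was arbitrary, $x \in \fix(H)$, and so $X \subset \fix(H)$. Both directions reduce to the statement $\forall h \in H\, \forall x \in X,\ h(x) = x$.

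For completeness I also record that both maps reverse order. If $H_1 \leq H_2$ then $\fix(H_2) \subset \fix(H_1)$, and if $X_1 \subset X_2$ then $\st(X_2) \leq \st(X_1)$. This is immediate from the definitions, and it also follows formally from the adjunction by taking $X = \fix(H)$ or $H = \st(X)$. Consequently $\fix \circ \st$ and $\st \circ \fix$ are closure operators, which sets up the later discussion of closed points and closed subgroups.

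There is no real obstacle in this argument. The only point needing care is that $\st(X)$ must be a subgroup rather than merely a subset, so that the relation $H \leq \st(X)$ is meaningful in $\mathcal{L}(G)$. That is exactly where closure of $G$ under inverses and the bijectivity of symmetries are used. Computability plays no role here.
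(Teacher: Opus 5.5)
Your proof is correct. The paper states this theorem without proof, treating it as immediate from the definitions, and your argument (reducing both sides to the condition that every $h \in H$ fixes every $x \in X$, together with checking that $\st(X)$ is closed under products and inverses so that $\st$ actually lands in $\mathcal{L}(G)$) is exactly the routine verification the paper leaves implicit.
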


\begin{definition} (Closure Operations)
    Composing the left and right adjoints of the Galois connection gives closure operations. 
    \begin{enumerate}
        \item On subsets of $\{0,1\}^*$
            \begin{equation}
                \overline{X} := \fix(\st(X)) \supseteq X 
            \end{equation}
        \item On subgroups of $G$
            \begin{equation}
                \overline{H} := \st(\fix(X)) \geq H
            \end{equation}
        \end{enumerate}
        These are closure operations in the sense they are monotone and idempotent, and are guaranteed by the Galois connection.

    A subset $X \subseteq \{0,1\}^*$ or subgroup $H \leq G$ is \emph{closed} when its closure is itself. A point $x \in \{0,1 \}^*$ is said to be \emph{closed} when $\{x\}$ is closed. 
    \end{definition}

The closure of $x$ is the intersection of the fixed points sets of all the elements of $G$ that fix $x$. 
\begin{equation}
    \overline{\{x\}} = \bigcap_{f(x) = x} \fix(\langle f \rangle)
\end{equation}

\begin{proposition}
    \label{prop:closed}
    A symmetry group $G$ can isolate every $x \in \bin$ if and only if every maximal subgroup of $G$ is closed in $\mathcal{L}(G)$
\end{proposition}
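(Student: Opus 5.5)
We will prove the two directions separately. Throughout we use only the Galois connection of Theorem~\ref{thm:galois} and the closure operator $\overline{H}=\st(\fix(H))\geq H$. The first step is the basic observation that drives both directions. If $M$ is a maximal (proper) subgroup of $G$, then $M\leq\overline{M}\leq G$, so by maximality $\overline{M}\in\{M,G\}$. Hence $M$ is closed if and only if $\overline{M}\neq G$. Unwinding the definitions, $\overline{M}=G$ holds exactly when every $g\in G$ fixes $\fix(M)$ pointwise, i.e.\ when $\fix(M)\subseteq\fix(G)$. So the statement reduces to comparing the two sets
\begin{equation}
\fix(M)\quad\text{and}\quad\fix(G)=\bigcap_{g\in G}\fix(g).
\end{equation}

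($\Rightarrow$) Assume $G$ isolates every $x\in\bin$. We first show $\fix(G)=\emptyset$. Take $x\neq y$ and an isolator $g\in\I_G(x)$. Then $g(y)\neq y$, so $y\notin\fix(G)$. Applying the same argument with the roles of $x$ and $y$ swapped removes $x$ as well. Consequently a maximal $M$ is closed if and only if $\fix(M)\neq\emptyset$.

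The remaining task is to show that a maximal subgroup has a fixed point. The plan is to exhibit, for each $x$, the stabilizer $\st_G(x)$ as a closed proper subgroup. It is proper because an isolator of some $y\neq x$ moves $x$. It is closed because $\fix(\st_G(x))=\{x\}$: any $z\neq x$ is moved by an isolator of $x$, which lies in $\st_G(x)$. We would then argue that every maximal subgroup is of the form $\st_G(x)$.

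\textbf{This is the step I expect to be the main obstacle.} A maximal subgroup acting without fixed points (for instance one acting transitively on a cofinite set) is not excluded by the definitions. My first attempt would be to show that $\langle M,\, g\rangle$ for a suitable isolator $g$ is still proper, contradicting maximality unless $M$ has a fixed point. If that fails, I would read ``maximal subgroup'' as ``maximal among proper closed subgroups'' (equivalently, maximal among subgroups with nonempty fixed set), which is what the paper's later discussion of maximal closed subgroups suggests. Under that reading the direction is immediate: such subgroups are exactly the $\st_G(x)$ with $x$ a closed point, and they are closed by construction.

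($\Leftarrow$) Assume every maximal subgroup is closed, and fix $x$. The plan is to produce an element of $\st_G(x)$ whose only fixed point is $x$. First we embed $\st_G(x)$ in maximal subgroups. Zorn's lemma does not apply directly, since $G$ need not be finitely generated; I would use closedness of the maximal elements among stabilizers instead. From closedness we get $\st_G(x)=\overline{\st_G(x)}$ and $\fix(\st_G(x))=\{x\}$, so for each $y\neq x$ there is $g_y\in\st_G(x)$ with $g_y(y)\neq y$.

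The final step is to pass from ``for each $y$ some element moves $y$'' to ``one element moves all $y\neq x$''. This is a second genuine difficulty, because pointwise separation does not automatically yield a single isolator. I would try to build the isolator as an effectively assembled composite of the $g_y$ along an enumeration of $\bin\setminus\{x\}$, checking at each stage that no new fixed points are created. This is the piece I would verify most carefully, and it may require an extra hypothesis on $G$ (closure under such patchings, as holds for $\TB$ by Theorem~\ref{thm:TB}).
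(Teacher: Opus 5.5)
The two obstacles you flag are real, and neither can be overcome: the proposition is false as stated, and the paper's own argument does not get past these points either. Your opening reduction is correct: a maximal $M$ is closed iff $\fix(M)\not\subseteq\fix(G)$. That reduction is exactly what exposes the failure.

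\emph{Counterexample to ($\Rightarrow$).} Take $G=\TB$, which isolates every string (Theorem~\ref{thm:TB}). Fix distinct $a,b\in\bin$ and let $M=\{f\in\TB : f(\{a,b\})=\{a,b\}\}$. The transposition $(a\ b)$ lies in $M$, as does every transposition of two strings outside $\{a,b\}$. Hence $\fix(M)=\emptyset$ and $\overline{M}=\st(\emptyset)=\TB\neq M$, so $M$ is not closed. Yet $M$ is maximal. Let $g\notin M$. Pick $w\in\{a,b\}$ with $v=g^{-1}(w)\notin\{a,b\}$, and pick $d\notin\{a,b,v\}$ with $g(d)\notin\{a,b\}$. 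Then $g\circ(v\ d)\circ g^{-1}=(w\ g(d))\in\langle M,g\rangle$, so $\langle M,g\rangle$ contains every finitary permutation. Every $f\in\TB$ is a finitary permutation times an element of $M$: compose $f$ with a finitary permutation sending $f(a),f(b)$ back to $a,b$. This is precisely the fixed-point-free maximal subgroup you worried about, so no argument via $\langle M,g\rangle$ can produce a fixed point of $M$.

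\emph{Counterexample to ($\Leftarrow$).} Take $G=\{\mathrm{id},\tau\}$, where $\tau$ is a computable involution swapping two strings and fixing all others. Its only maximal subgroup $\{\mathrm{id}\}$ is closed, since $\st(\bin)=\{\mathrm{id}\}$. Yet no element of $G$ has exactly one fixed point.

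Your ($\Leftarrow$) plan also contains two specific errors. First, the inference ``from closedness we get $\fix(\st_G(x))=\{x\}$'' conflates two notions. The subgroup $\st_G(x)$ is closed for every $G$ and every $x$, as is any $\st(X)$. By contrast, $\fix(\st_G(x))=\{x\}$ says the \emph{point} $x$ is closed, and the hypothesis on maximal subgroups says nothing about that. Second, the passage from pointwise separation to a single isolator fails outright. In the group of finitary permutations of $\bin$, every $x$ satisfies $\fix(\st_G(x))=\{x\}$ (use transpositions avoiding $x$). But every element fixes cofinitely many strings, so $\I_G(x)=\emptyset$.

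The same example refutes the opening claim of the paper's proof, that $x$ is closed iff $\I_G(x)\neq\emptyset$; only the ``if'' holds. The paper then discusses only maximal \emph{closed} subgroups, i.e.\ it silently adopts your fallback reading. That reading does not rescue the statement. ``Every maximal closed subgroup is closed'' holds for every $G$, so ($\Rightarrow$) becomes a tautology and ($\Leftarrow$) becomes false (the trivial group is a counterexample).

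What is actually true is one-directional. If $G$ isolates every $x$, then $\fix(G)=\emptyset$, every point is closed, and the maximal proper closed subgroups are exactly the $\st_G(x)$, in bijection with $\bin$. This follows quickly from the observations in your ($\Rightarrow$) paragraph.
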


\begin{proof}
A point $x$ is closed i.e. $\fix(\st(x)) = \{x\}$ iff $\I_G(x) \neq \emptyset$. The Galois connection of Thm.\ref{thm:galois} implies there is a order reversing bijection between closed sets in $\{0,1\}^*$ and closed subgroups in $G$ given by $X \longleftrightarrow \st(X)$ and $H \longleftrightarrow \fix(H)$. i.e. the closed subgroups are precisely the subgroups of the form $\st(X)$ for some subset $X \subseteq \bin$. The \emph{maximal} closed subgroups are $\st(x)$ for some $x \in \bin$. The closed points of $\{0,1\}^n$ are therefore in one-to-one correspondence with \emph{maximal} closed subgroups of $G$. 
\end{proof}

\begin{proposition}
    Given a symmetry group $G$, the set of isolating symmetries $\I_G$ can be identified with the minimal (w.r.t lattice) points in the join semi-lattice of dense subgroups of $G$. 
\end{proposition}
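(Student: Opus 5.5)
The plan is to fix the notion of density and the semilattice structure first, and then to produce the identification as an explicit map $\I_G \to \{\text{minimal dense subgroups}\}$ together with an inverse.

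\textbf{Setting up.} I read ``dense'' through the Galois connection of Thm.~\ref{thm:galois}. A subgroup $H \leq G$ is \emph{dense} if its fixed set is a single point, $\fix(H) = \{x\}$. Equivalently, $\overline{\{x\}} \supseteq$ nothing beyond $x$ is forced by $H$; in this case I say $H$ is dense at $x$.

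The join in $\mathcal{L}(G)$ is $H \vee K = \langle H, K\rangle$. Since fixed points of generators are fixed points of the generated group,
\begin{equation}
\fix(H \vee K) = \fix(H) \cap \fix(K).
\end{equation}
Hence the join of two subgroups dense at the same $x$ is again dense at $x$. So for each $x$ the dense subgroups at $x$ form a join semilattice $\mathcal{D}_x$, and the full family of dense subgroups is $\coprod_x \mathcal{D}_x$. This matches the decomposition $\I_G = \coprod_x \I_G(x)$, so it suffices to work one point $x$ at a time.

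\textbf{The map and its image.} For $g \in \I_G(x)$, send $g \mapsto \langle g \rangle$. Because $\fix(\langle g\rangle) = \fix(g) = \{x\}$, this cyclic group lies in $\mathcal{D}_x$. Conversely, I claim every dense $H \in \mathcal{D}_x$ is a join of cyclic subgroups, namely $H = \bigvee_{h \in H} \langle h \rangle$. So the \emph{minimal}, i.e.\ join-irreducible, generators of the semilattice $\mathcal{D}_x$ must be cyclic subgroups $\langle h \rangle$ that are themselves dense. A cyclic $\langle h \rangle$ is dense at $x$ exactly when $\fix(h) = \{x\}$, that is, when $h \in \I_G(x)$. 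Thus the minimal points of $\mathcal{D}_x$ are precisely the images $\langle g \rangle$ with $g \in \I_G(x)$. This gives surjectivity of $g \mapsto \langle g\rangle$ onto the minimal points, and the inverse assignment $\langle h\rangle \mapsto h$.

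\textbf{Main obstacle.} The hard step is the word ``identified'': the map $g \mapsto \langle g \rangle$ is not injective, since for instance $g$ and $g^{-1}$ generate the same group. There is also a second issue. A proper power $g^k$ may have a strictly larger fixed set than $g$, or may still isolate $x$. In the latter case $\langle g^k\rangle < \langle g \rangle$ are both dense, and $\langle g\rangle$ fails to be minimal in the subgroup order.

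I would handle this by taking ``minimal'' in the semilattice sense, meaning join-irreducible generators rather than order-minimal subgroups. Under that reading every dense cyclic group counts as a minimal point. The identification is then with $\I_G$ taken modulo the equivalence $g \sim g'$ iff $\langle g \rangle = \langle g'\rangle$, which is the natural fibre of the map.

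I would first try to verify the join-irreducibility claim directly. Suppose $\langle g\rangle = H_1 \vee H_2$ with the $H_i$ dense. Then one of the $H_i$ must contain a generator of $\langle g\rangle$, and hence equals $\langle g \rangle$. This step is where the argument could fail for non-cyclic joins, and it is the step I expect to need the most care.
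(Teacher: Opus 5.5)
Your setup matches the paper's. Density means $\fix(H)=\{x\}$, and $\mathcal{D}_x$ is closed under joins because $\fix(H\vee K)=\fix(H)\cap\fix(K)$. You also correctly observe that $\langle h\rangle\in\mathcal{D}_x$ iff $h\in\I_G(x)$; with $\I_G(x)$ taken modulo $\langle g\rangle=\langle g'\rangle$, that equivalence is the only part of the identification that survives.

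The gap is the step from $H=\bigvee_{h\in H}\langle h\rangle$ to ``the minimal (join-irreducible) elements of $\mathcal{D}_x$ are dense cyclic groups.'' That join is taken in $\mathcal{L}(G)$, and the pieces $\langle h\rangle$ need not be dense. So it says nothing about reducibility inside $\mathcal{D}_x$, and a dense subgroup need not contain any isolator at all. Concretely, let $V=\{e,a,b,ab\}\cong(\mathbb{Z}/2)^2$ act on $\bin$ as follows. It fixes $x$. It has three two-element orbits $\{y_1,y_2\}$, $\{z_1,z_2\}$, $\{w_1,w_2\}$, fixed pointwise by $a$, $b$, $ab$ respectively and swapped by the other two non-identity elements. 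All remaining strings are grouped computably into free orbits of size four. Every non-identity element then fixes exactly three strings, so $\I_V=\emptyset$, while $\fix(V)=\{x\}$. For $G=V$ one gets $\mathcal{D}_x=\{V\}$: a minimal, join-irreducible dense subgroup that is not $\langle g\rangle$ for any isolator.

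Your proposed repair fails too. Dense cyclic groups need not be join-irreducible, and your claim that one of $H_1,H_2$ must contain a generator of $\langle g\rangle$ is false even when the $H_i$ are cyclic. Let $g\in\TB$ fix $x$ and act on $\bin\setminus\{x\}$ as a single $\mathbb{Z}$-orbit (a computable shift). Every $g^k$ with $k\neq 0$ isolates $x$. Hence $\langle g\rangle=\langle g^2\rangle\vee\langle g^3\rangle$ with both pieces dense and neither containing $g^{\pm1}$. Moreover, take $G=\langle g\rangle$ and $k\ge 1$. The strict inclusion $\langle g^{2k}\rangle<\langle g^k\rangle$ and the splitting $\langle g^k\rangle=\langle g^{2k}\rangle\vee\langle g^{3k}\rangle$ show that $\mathcal{D}_x$ has no order-minimal and no join-irreducible elements. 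Yet $\I_G(x)=G\setminus\{e\}$ is infinite. A finite version: take $g$ of order $6$ with all other orbits of size $6$; again $\langle g\rangle=\langle g^2\rangle\vee\langle g^3\rangle$.

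So under neither reading of ``minimal'' can your argument be completed: the statement, read literally, is false. The paper's own proof does not close this gap either. It simply asserts that the lowest points of $D(x)$ are the cyclic groups generated by isolators (its ``principal dense subgroups''), and the same examples refute that. What is provable is the weaker correspondence between isolators and the \emph{cyclic} members of $\mathcal{D}_x$. Your remarks on non-injectivity and on powers of $g$ are more careful than the paper on exactly this point.
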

\begin{proof}
    For any closed bit string $x \in \bin$ let $\st(x) \leq G$ be the corresponding maximal closed point in the lattice $\mathcal{L}(G)$. The elements of $\st(x)$ fix $x$, but are not necessarily isolators of $x$. We can characterize the isolators subset $\I_G(x) \subseteq \st(x)$ in the following way:

    A dense point below $\st(x)$ is a subgroup $D \leq \st(x)$ such that $\overline{D} = \st(\fix(D)) = \st(x)$. Dense points are those subgroups of $\st(x)$ whose fixed points set is $\{x\}$. The join of two dense subgroups is dense, 
    \begin{equation}
        \fix(\langle D_1, D_2 \rangle) = \fix(D_1) \cap \fix(D_2) = \{x \}
    \end{equation}
    Therefore the set of dense subgroups of $\st(x)$ forms a join semi-lattice, denote this by $D(x)$. The lowest points of this semi-lattice are cyclic groups generated by isolators of $x$. The cyclic subgroups $\langle g \rangle : g \in \I_G(x)$ are the principal dense subgroups. The stabilizer $\st(x)$ can be constructed as the join of all the cyclic dense subgroups 
    \begin{equation}
        \bigvee_{g \in \I_G(x)} \langle g \rangle = \st(x)
    \end{equation}
\end{proof}

\begin{theorem}
    Let $G$ be a symmetry group. The conditions of Thm.\ref{thm:alg}(2) are satisfied if
    \begin{enumerate}
        \item $\exists H \leq G$ such that $H$ is closed in $\mathcal{L}(G)$ and every maximal subgroup of $H$ is closed in $\mathcal{L}(G)$
        \item the subgroup lattice $\mathcal{L}(H)$ is computable (i.e. the points in the lattice, join, meet and closure operations are computable)
    \end{enumerate}
\end{theorem}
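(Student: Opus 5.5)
We must exhibit a subgroup, which we take to be the given $H$ itself, satisfying conditions (a) and (b) of Thm.~\ref{thm:alg}(2): $H$ isolates every $x \in \bin$, and both $H$ and $\I_H$ are r.e. The plan has three steps. First derive isolation from the closure hypotheses via Prop.~\ref{prop:closed}. Then derive recursive enumerability of $H$ from computability of $\mathcal{L}(H)$. Finally obtain recursive enumerability of $\I_H$ from the dense-subgroup description of isolators in the preceding proposition.

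\emph{Step 1 (isolation).} Apply Prop.~\ref{prop:closed} to $H$ in place of $G$. Every maximal subgroup of $H$ is closed, and $H$ is itself closed in $\mathcal{L}(G)$. So $\fix$ and $\st$ restricted to $H$ are the restrictions of the Galois connection of Thm.~\ref{thm:galois}. Hence for each $x$ the maximal closed subgroup $\st_H(x)$ satisfies $\fix(\st_H(x))=\{x\}$. Closedness of points is equivalent to $\I_H(x)\neq\emptyset$, so every $x$ is isolated by $H$.

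\emph{Step 2 ($H$ r.e.).} I will read hypothesis 2 as asserting an effective listing of the points of $\mathcal{L}(H)$. In particular the cyclic subgroups $\langle h\rangle$ are listed, each presented by a generator program. Enumerating these generators gives an enumeration of $H$.

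\emph{Step 3 ($\I_H$ r.e.).} By the preceding proposition, $h\in\I_H$ exactly when $\langle h\rangle$ is a minimal dense subgroup below some $\st_H(x)$. Equivalently, $\overline{\langle h\rangle}=\st(\fix(\langle h\rangle))$ is a maximal closed subgroup. Since closure and meet are computable on $\mathcal{L}(H)$, for each enumerated $h$ we can compute $\overline{\langle h\rangle}$. We then test whether it equals $\st_H(x)$, searching over $x\in\bin$ with $h(x)=x$.

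This test is the main obstacle. A priori, membership in $\I_H$ is only co-r.e. (Prop.~\ref{prop:symmalg}), because one must certify the absence of a second fixed point. The computability of the closure operator must supply this certificate. Knowing $\overline{\langle h\rangle}$ effectively means knowing $\fix(\langle h\rangle)$ as the closed set corresponding to it. That set is a singleton exactly when the closure equals one of the maximal closed subgroups $\st_H(x)$, which are themselves effectively listed. So the plan is to dovetail over $h\in H$ and $x\in\bin$, outputting $h$ once we verify $h(x)=x$ and $\overline{\langle h\rangle}=\st_H(x)$. The second check is a decidable equality of lattice points by hypothesis 2.

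If this reading of "computable lattice" is too weak to decide that equality, the fallback is to strengthen the interpretation. Closure computability would then be taken to include deciding whether a closed subgroup is maximal. Under either reading, Thm.~\ref{thm:alg} then yields that $G$ is fix-retractable.
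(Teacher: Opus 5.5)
The paper states this theorem without proof, so there is no argument of the author's to compare with. Judged on its own, your proposal has a genuine gap in Step~1, and the gap cannot be patched: the statement as written is false. Let $\tau\in\TB$ swap two strings $a\neq b$ and fix everything else, and take $G=H=\{\mathrm{id},\tau\}$.
\begin{itemize}
\item $H=G$ is automatically closed in $\mathcal{L}(G)$, since its closure $\st(\fix(G))$ is a subgroup of $G$ containing $G$.
\item Its unique maximal subgroup $\{\mathrm{id}\}$ is closed, because $\st(\fix(\{\mathrm{id}\}))=\st(\bin)=\{\mathrm{id}\}$.
\item $\mathcal{L}(H)$ has two points, so hypothesis~2 holds trivially.
\end{itemize}
Yet $\fix(\tau)=\bin\setminus\{a,b\}$. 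So no element of $G$ has a unique fixed point, and no subgroup of $G$ satisfies Thm.~\ref{thm:alg}(2)(a). The trivial group $G=H=\{\mathrm{id}\}$ is also a counterexample, vacuously. The same example refutes the ``if'' direction of Prop.~\ref{prop:closed}, which your Step~1 invokes.

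Concretely, Step~1 fails at three points.
\begin{enumerate}
\item Closedness of a subgroup $K$ means $\st(\fix(K))=K$. For $K=\st_H(x)$ with $H$ closed this is automatic, since $\st_H(x)=\st(\fix(H)\cup\{x\})$. It says nothing about closedness of the \emph{point} $x$, i.e.\ whether $\fix(\st_H(x))=\{x\}$.
\item Nothing makes $\st_H(x)$ a maximal subgroup of $H$, so hypothesis~1 does not even apply to it.
\item The equivalence ``$x$ closed $\iff \I_H(x)\neq\emptyset$'' that you take from the proof of Prop.~\ref{prop:closed} holds only from right to left. Consider the group of finitary permutations of $\bin$, all of which lie in $\TB$. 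Every point is closed: for $y\neq x$, the transposition $(y\,z)$ with $z\notin\{x,y\}$ fixes $x$ and moves $y$. Yet every element fixes cofinitely many strings, so there are no isolators at all.
\end{enumerate}

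Your Step~3 correctly locates the real difficulty: certifying the absence of a second fixed point, which is only co-r.e. But resolving it by strengthening the reading of hypothesis~2 until that certificate is available amounts to assuming the conclusion. Any true version of this theorem needs a hypothesis that directly supplies isolators (essentially condition (a) itself), together with an effective membership test for $\I_H$. Closure conditions on the subgroup lattice alone cannot produce them.
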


\section{Conclusion}
We have established a direct analogue of the classical Coding Theorem in the context of symmetry groups. The central notion of a fix-retractable symmetry group—one that admits a computable section selecting an isolating symmetry for every string—provides the precise condition under which the symmetry prior $\ms(x)$ becomes a universal lower semicomputable semimeasure. When this condition holds, the Geometric Coding Theorem $-\log_2\ms(x) = K(x) + O(1)$ follows, extending Solomonoff's celebrated result to the realm of algorithmic symmetry. The Galois connection between subgroups of $G$ and subsets of binary strings reveals a rich algebraic structure: closed points correspond to maximal closed subgroups, and the join-semilattice of dense subgroups offers a natural framework for understanding how isolating symmetries generate the full stabilizer. The results presented here form the foundation for the broader program of Computational Algorithmic Statistics (CAS).

\end{document}